\newcommand{\ignore}[1]{}
\newcommand{\hiddencomment}[1]{}
\newcommand{\oldcomment}[1]{}
\newcommand{\var}[1]{\ensuremath{\mathsf{#1}}}
\newcommand{\condSafe}{\ensuremath{\mathsf{CondSafe}}\xspace}
\newcommand{\checkSafe}{\ensuremath{\mathsf{CheckSafe}}\xspace}
\newcommand{\sat}{SAT\xspace}
\newcommand{\maxsat}{Max-SAT\xspace}
\newcommand{\smt}{SMT\xspace}
\newcommand{\maxsmt}{Max-SMT\xspace}
\newcommand{\MAXSAT}{Max-SAT}
\newcommand{\MAXSMT}{Max-SMT}
\newcommand{\limplies}{\Rightarrow}
\def\defemb#1#2{\expandafter\def\csname #1\endcsname
                   {\relax\ifmmode #2\else\hbox{$#2$}\fi}}
\newcommand{\entries}{\ensuremath{\cE}\xspace}
\newcommand{\formula}       {\mathbb{F}}
\newcommand{\initiation}    {\mathbb{I}}
\newcommand{\consecution}   {\mathbb{C}}
\newcommand{\safety}        {\mathbb{S}}
\newcommand{\define}        {\stackrel{\mathit{def}}{=}}
\newcommand{\true}               {\top}
\newcommand{\weight}               {\omega}
\newcommand{\weightInitiation}     {\weight_{\initiation}}
\newcommand{\Var}               {p}
\newcommand{\VarInitiation}[1]  {\Var_{\initiation_{#1}}}
\newcommand{\vars}{\cV}
\DeclareMathOperator{\QF}{\mathcal{F}}
\newtheorem{definition}{Definition}
\newtheorem{theorem}{Theorem}
\newtheorem{example}{Example}
\begin{document}

\title{Compositional Safety Verification with \maxsmt}

\author{\IEEEauthorblockN{
    Marc Brockschmidt	          \IEEEauthorrefmark{1},
    Daniel Larraz		  \IEEEauthorrefmark{2},
    Albert Oliveras		  \IEEEauthorrefmark{2}, 
    Enric Rodr\'\i guez-Carbonell \IEEEauthorrefmark{2} and
    Albert Rubio                  \IEEEauthorrefmark{2}}
  \IEEEauthorblockA{\IEEEauthorrefmark{1}Microsoft Research, Cambridge}
  \IEEEauthorblockA{\IEEEauthorrefmark{2}Universitat Polit\`ecnica de Catalunya}  

  \thanks{
    This work has been supported by Spanish MINECO under the
    grant TIN2013-45732-C4-3-P (project DAMAS) and the FPI grant
    BES-2011-044621 (Larraz).
  } }

\maketitle

\begin{abstract}
We present an automated compositional program verification technique for
safety properties based on \emph{conditional inductive invariants}. For a given
program part (e.g., a single loop) and a postcondition $\varphi$,
we show how to, using a Max-SMT solver,
an inductive invariant together with a
precondition
can be synthesized
so that the precondition ensures the validity of the invariant and
that the invariant implies $\varphi$.
From this, we build a bottom-up program verification framework that propagates
preconditions of small program parts as postconditions for preceding program
parts. The method recovers from failures to prove the validity of a precondition,
using the obtained intermediate results to restrict the search space for
further proof attempts.

As only small program parts need to be handled at a time, our method is
scalable and distributable. The derived conditions can be viewed as implicit
contracts between different parts of the program, and thus enable an incremental
program analysis.
\end{abstract}


\section{Introduction}

To have impact on everyday software development, a verification engine needs
to be able to process the millions of lines of code often encountered in
mature software projects.
At the same time, the analysis should be repeated every time developers commit a
change, and should report feedback in the course of minutes, so that fixes
can be applied promptly.
Consequently, a central theme in recent research on automated program
verification has been \emph{scalability}.
As a natural solution to this problem, \emph{compositional}
program analyses~\cite{Godefroid10,Calcagno11,Li13} have been
proposed. They analyze program parts \mbox{(semi-)independently} and then
combine the results to obtain a whole-program proof.

For this, a compositional analysis has to predict likely intermediate assertions
that allow us to break whole-program reasoning into many instances of local
reasoning.
This strategy simplifies the individual reasoning steps and allows
distributing the analysis~\cite{Albarghouthi12b}.
The disadvantage of compositional analyses has traditionally been their
precision: local analyses must blindly choose the intermediate assertions.
While in some domains ({\em e.g.} heap) some heuristics have been
found~\cite{Calcagno11}, effective strategies for guessing and/or refining
useful intermediate assertions or summaries in arithmetic domains remains an open
problem.

In this paper we introduce a new method for predicting and refining
intermediate arithmetic assertions for compositional reasoning about
sequential programs. A key component in our approach is \maxsmt
solving. \maxsmt solvers can deal with hard and soft constraints,
where hard constraints are mandatory, and soft constraints are those
that we would like to hold, but are not required to. Hard constraints
express what is needed for the soundness of our analysis, while soft
ones favor the solutions that are more useful for our technique. More
precisely, we use \maxsmt to iteratively infer \emph{conditional
  inductive invariants}\footnote{This concept was previously
  introduced with the name ``quasi-invariant'' in \cite{Larraz14} in
  the different context of proving program non-termination.}, which
prove the validity of a property, given that a precondition holds.
Hence, if the precondition holds, the program is proved safe.
Otherwise, thanks to a novel program transformation technique we call
\emph{narrowing}\footnote{This narrowing is
  inspired by the narrowing in term rewrite systems, and is unrelated to the
  notion with the same name used in abstract interpretation.},
we exploit the failing conditional invariants to
focus on what is missing in the safety proof of the program. Then
new conditional invariants are sought, and the process is repeated until
the safety proof is finally completed. Based on this, we
introduce a new bottom-up program analysis procedure that infers
conditional invariants in a goal-directed manner, starting from a
property that we wish to prove for the program. Our approach makes
distributing analysis tasks as simple as in other bottom-up analyses,
but also enjoys the precision of CEGAR-based provers.


\section{Illustration of the Method}
In this section, we illustrate the core concepts of our approach by
using some small examples. We will give the formal definition of the
used methods in \rSC{sect:safety}.

We handle programs by considering one strongly connected component (SCC)
$\cC$ of the control-flow graph at a time, together with the sequential parts
of the program leading to $\cC$, either from initial states or other
SCCs.

Instead of program invariants, for each SCC we synthesize
\emph{conditional inductive invariants}. These are inductive
properties such that they may not always hold whenever the SCC is
reached, but once they hold, then they are always satisfied.

\paragraph{Conditional Inductive Invariants}
\begin{wrapfigure}[7]{r}{2.45cm}
  \vspace{-1.5ex}
  \begin{tabular}{l}
   \textbf{while} $\var{i} > 0$ \textbf{do} \\
   \quad$\var{x} := \var{x} + 5$;\\
   \quad$\var{i} := \var{i} - 1$;\\
   \textbf{done}\\
   \textbf{assert}($\var{x} \geq 0$);
  \end{tabular}
  \vspace*{-2ex}
  \caption{\label{fig:ex1}}
\end{wrapfigure}

As an example, consider the program snippet in \rF{fig:ex1}, where we
do not assume any knowledge about the rest of the program. To prove
the assertion, we need an inductive property $\QQ$ for the loop
\ignore{preceding it,} such that $\QQ$ together with the negation of the loop
condition $\var{i} > 0$ implies the assertion. Using our
constraint-solving based method \condSafe
(cf. \rSC{sect:Conditional-Safety}), we find $\QQ_1 = \var{x} + 5 \cdot
\var{i} \geq 0$. 
The property $\QQ_1$ can be seen as a \emph{precondition} at the loop entry
for the validity of the assertion.

\paragraph{Combining Conditional Inductive Invariants}
Once we have found a
conditional inductive invariant for an SCC, we use the generated
preconditions as postconditions for its preceding SCCs in the program.

\begin{wrapfigure}[6]{l}{2.45cm}
  \vspace{-1.5ex}
  \begin{tabular}{l}
    \textbf{while} $\var{j} > 0$ \textbf{do} \\
    \quad$\var{j} := \var{j} - 1$;\\
    \quad$\var{i} := \var{i} + 1$;\\
    \textbf{done}\\
  \end{tabular}
  \caption{\label{fig:ex2}}
\end{wrapfigure}

As an example, assume that the loop from \rF{fig:ex1} is directly
preceded by the loop in \rF{fig:ex2}. We now use the precondition
$\QQ_1$ we obtained earlier as input to our conditional invariant
synthesis method, similarly to the assertion in \rF{fig:ex1}.
Thus, we now look for an inductive property $\QQ_2$ that, together with $\lnot
(\var{j} > 0)$, implies $\QQ_1$.
In this case we obtain the conditional invariant $\QQ_2 = \var{j} \geq 0 \land
\var{x} + 5 \cdot (\var{i} + \var{j}) \geq 0$ for the loop. As with
$\QQ_1$, now we can see $\QQ_2$ as a precondition at the loop entry,
and propagate $\QQ_2$ up to the preceding SCCs in the program.

\paragraph{Recovering from Failures}
When we cannot prove that a precondition always holds, we try to recover
and find an alternative precondition. In this process, we make
use of the results obtained so far, and \emph{narrow}
the program using our intermediate results. As an example, consider the loop in
\rF{fig:ex3}.

\begin{wrapfigure}[7]{r}{3.3cm}
  \vspace{-1.5ex}
  \begin{tabular}{l}
    \textbf{while} $\mathit{unknown()}$ \textbf{do} \\
    \quad\textbf{assert}($\var{x} \neq \var{y}$);\\
    \quad$\var{x} := \var{x} + 1$;\\
    \quad$\var{y} := \var{y} + 1$;\\
    \textbf{done}\\
  \end{tabular}
  \caption{\label{fig:ex3}}
\end{wrapfigure}
We again apply our method \condSafe to find a conditional invariant
for this loop which, together with the loop condition, implies the
assertion in the loop body. As it can only synthesize conjunctions of linear
inequalities, it produces the conditional invariant $\QQ_3 = x > y$ for
the loop. However, assume that the precondition $\QQ_3$ could not be
proven to always hold in the context of our example. In that case, we
use the obtained information to narrow the program and look for
another precondition.

\begin{wrapfigure}[8]{r}{3.3cm}
  \vspace{-4ex}
  \begin{tabular}{l}
    \textbf{if} $\neg(\var{x} \!>\! \var{y})$ \textbf{then} \\
    \quad\textbf{while} $\neg(\var{x} \!>\! \var{y})$ \!\textbf{do} \\
    \quad\quad\textbf{assert}($\var{x} \neq \var{y}$);\\
    \quad\quad$\var{x} := \var{x} + 1$;\\
    \quad\quad$\var{y} := \var{y} + 1$;\\
    \quad\textbf{done}\\
    \textbf{fi}\\
  \end{tabular}
  \caption{\label{fig:ex3-strengthened}}
\end{wrapfigure}
Intuitively, our program narrowing reflects that states represented by
the conditional invariant found earlier are already proven to be safe.
Hence, we only need to consider states for which the negation of the
conditional invariant holds, i.e., we can add its negation as an
assumption to the program. In our example, this yields the modified
version of \rF{fig:ex3} displayed in \rF{fig:ex3-strengthened}.
Another call to \condSafe then yields the conditional invariant
$\QQ_3' = x < y$ for the loop. This means that we can ensure the
validity of the assertion if before the conditional statement we
satisfy that $\neg(x > y) \limplies x < y$, or equivalently, $x \neq
y$. In general, this narrowing allows us to find (some) disjunctive
invariants.


\section{Preliminaries}
\label{sect:Prelims}
\setcounter{paragraph}{0}

\subsubsection{\sat, \maxsat, and \maxsmt}
Let $\cP$ be a fixed set of \emph{propositional variables}. For $p \in
\cP$, $p$ and $\lnot p$ are \emph{literals}.
A \emph{clause} is a disjunction of literals $l_1 \lor\cdots\lor l_n$.
A (CNF) \emph{propositional formula} is a conjunction of clauses $C_1
\land\cdots\land C_m$.  The problem of \emph{propositional
  satisfiability} (\emph{\sat}) is to determine whether a
propositional formula is \emph{satisfiable}.  An extension of \sat is
\emph{satisfiability modulo theories (SMT)}~\cite{HandbookOfSAT2009},
where satisfiability of a formula with literals from a given
background theory is checked.  We will use the theory of
\emph{quantifier-free integer (non-)linear arithmetic}, where literals
are inequalities of linear (resp. polynomial) arithmetic expressions.

Another extension of \sat is \emph{\MAXSAT}~\cite{HandbookOfSAT2009}, which
generalizes \sat to
finding an assignment such that the number of satisfied clauses in a given formula $F$ is
maximized.
Finally, \emph{\MAXSMT} combines \maxsat and
\smt. 
A \emph{(weighted partial) \maxsmt} problem is a formula of the form 
 $H_1 \land \ldots \land H_n \land [S_1,\weight_1] \land \ldots \land [S_m,\weight_m]$,
where the hard clauses $H_i$ and the
soft clauses $S_j$ (with weight $\weight_j$) are disjunctions of 
literals over a background theory,
and the aim is to find a model
of the hard clauses that maximizes the sum of the weights of the satisfied soft clauses.

\subsubsection{Programs and States}
We make heavy use of the program structure and hence represent programs as
graphs. For this, we fix a set of (integer) program \emph{variables} $\VV = \{
v_1,\ldots,v_n \}$ and denote 
by $\QF(\VV)$ the formulas consisting of
conjunctions of linear inequalities over the variables $\VV$. Let
$\LL$ be the set of program \emph{locations}, which contains a
\emph{canonical start location} $\ell_0$.  Program \emph{transitions} $\TT$ are
tuples $(\ell, \tau, \ell')$, where $\ell$ and $\ell' \in \LL$
represent the pre- and post-location respectively, and $\tau \in
\QF(\VV \cup \VV')$ describes its transition relation. Here $\VV' = \{
v_1', \ldots, v_n'\}$ are the \emph{post-variables}, i.e., the
values of the variables after the transition.\footnote{For $\varphi
  \in \QF(\VV)$, $\varphi' \!\in\! \QF(\VV')$ is the version
  of $\varphi$ using primed variables.} A transition is \emph{initial}
if its source location is $\ell_0$.
A \emph{program} is a set of
transitions. We view a program $\PP = (\LL, \TT)$ as a directed graph
(the \emph{control-flow graph}, CFG), in which edges are the transitions 
$\TT$ and nodes are the locations $\LL$.%
\footnote{Since we label transitions only with conjunctions of linear
  inequalities, disjunctive conditions are represented using several
  transitions with the same pre- and post-location. Thus, $\PP$ is actually a
  multigraph.}

A \emph{state} $s = (\ell, \vect{v})$ consists of a location $\ell \in
\LL$ and a \emph{valuation} $\vect{v}: \VV \to \mathbb{Z}$.  A state
$(\ell, \vect{v})$ is \emph{initial} if $\ell = \ell_0$. We denote
an \emph{evaluation step} with transition $t = (\ell, \tau, \ell')$ by
$(\ell, \vect{v}) \to_t (\ell', \vect{v'})$, where the valuations
$\vect{v}$, $\vect{v'}$ satisfy the formula $\tau$ of $t$. We use
$\to_{\PP}$ if we do not care about the executed transition, and
$\to^*_{\PP}$ to denote the transitive-reflexive closure of
$\to_{\PP}$. We say that a state $s$ is \emph{reachable} if there
exists an initial state $s_0$ such that $s_0 \to^*_{\PP} s$.

\subsubsection{Safety and Invariants}
An \emph{assertion} $(t, \varphi)$ is a pair of a transition $t \in \TT$ and a
formula $\varphi \in \QF(\VV)$.
A program $\PP$ is \emph{safe} for the assertion $(t, \varphi)$ if for every
evaluation $(\ell_0, \vect{v}_0) \to^*_{\PP} \circ 
\to_t (\ell, \vect{v})$, we have that $\vect{v} \models \varphi$ holds.%
\footnote{Here, $\to^*_{\PP} \circ \to_t$ denotes arbitrary program evaluations that end
with an evaluation step using $t$.}
Note that proving that a formula $\varphi$ always holds at a location
  $\ell$ can be handled in this setting by adding an extra
location $\ell^*$ and an extra transition
$t^* = (\ell, \mathrm{true}, \ell^*)$ and checking safety for $(t^*, \varphi)$.

  We call a map $\cI: \LL \to \QF(\VV)$ a \emph{program invariant} (or often just
\emph{invariant}) if for all reachable states $(\ell, \vect{v})$, we have $\vect{v}
\models \cI(\ell)$ holds.
An important class of program invariants are inductive invariants.
An invariant $\cI$ is \emph{inductive} if the following conditions hold:
\begin{center}
 \begin{tabular}{lrl}
     \textbf{Initiation: } 
   & $\true$
   & $\models \cI(\ell_0)$\\
     \textbf{Consecution: }
   & For $(\ell, \tau, \ell') \in \cP$:\quad
     $\cI(\ell) \land \tau$
   & $\models \cI(\ell')'$
 \end{tabular}
\end{center}

\subsubsection{Constraint Solving for Verification}
Inductive invariants can be generated using a \emph{constraint-based}
approach~\cite{Colonetal2003CAV,Bradley05a}. 
The idea is to consider templates for candidate invariant properties, such as
(conjunctions of) linear inequalities.
These templates contain both the program variables $\VV$ as well as template
variables $\VV_T$, whose values have to be determined to ensure the
required properties.
To this end, the conditions on inductive invariants are expressed by means of
\emph{constraints} of the form $\exists \VV_T . \forall \VV . \ldots$.
Any solution to these constraints then yields an invariant.
In the case of linear arithmetic, Farkas' Lemma~\cite{Schrijver} is often used
to handle the quantifier alternation in the generated constraints.
Intuitively, it allows one to transform $\exists\forall$ problems encountered in
invariant synthesis into $\exists$ problems.
In the general case, an \smt problem over non-linear arithmetic is
obtained, for which effective \smt solvers exist
\cite{Borrallerasetal2011JAR,JovanovicMoura2012IJCAR}.
By assigning weights to the different conditions, invariant generation can be
cast as an optimization problem in the \maxsmt framework~\cite{Larraz13,Larraz14}.


\section{Proving Safety}
\label{sect:safety}

Most automated techniques for proving program safety iteratively construct
\emph{inductive program invariants} as over-approximations of the reachable
state space.
Starting from the known set of initial states, a process to discover more
reachable states and refine the approximation is iterated, until it finally
reaches a fixed point (i.e., the invariant is inductive) and is strong enough
to imply program safety.
However, this requires taking the whole program into account, which is sometimes
infeasible or undesirable in practice.

In contrast to this, our method starts with the known unsafe states, and
iteratively constructs an under-approximation of the set of safe states, with
the goal of showing that all initial states are contained in that set.
For this, we introduce the notion of \emph{conditional safety}.
Intuitively, when proving that a program is \emph{$(\tilde{t},
  \tilde\varphi)$-conditionally safe for the assertion} $(t, \varphi)$
we consider evaluations starting after a $\to_{\tilde{t}} (\tilde\ell,
\tilde{\vect{v}})$ step, where $\tilde{\vect{v}}$ satisfies $\tilde\varphi$,
instead of evaluations starting at an initial state.
In particular, a program that is $(t_0, \true)$-conditionally safe for $(t,
\varphi)$ for all initial transitions $t_0$ is (unconditionally) safe for $(t,
\varphi)$.

\begin{definition}[Conditional safety]
  Let $\PP$ be a program, $t, \tilde{t}$ transitions and $\varphi,
  \tilde\varphi \in \QF(\VV)$.
  The program $\PP$ is \emph{$(\tilde{t}, \tilde\varphi)$-conditionally safe} for
  the assertion $(t, \varphi)$ if for any evaluation that contains
  $\to_{\tilde{t}} (\tilde\ell, \tilde{\vect{v}}) \to_{\PP}^* (\bar\ell,
  \bar{\vect{v}}) \to_t (\ell, \vect{v})$, we have
  $\tilde{\vect{v}} \models \tilde\varphi$ implies that $\vect{v} \models \varphi$.
  In that case we say that the assertion $(\tilde{t}, \tilde\varphi)$ is a
  \emph{precondition} for the \emph{postcondition} $(t, \varphi)$.
\end{definition}

Conditional safety is ``transitive'' in the sense that if a set of
transitions $\cE = \{ \widetilde{t_1}, \ldots, \widetilde{t_m}\}$
dominates $t$,\footnote{We say a set of transitions $\cE$ dominates
  transition $t$ if every path in the CFG from $\ell_0$ that contains
  $t$ must also contain some $\tilde{t}\in \cE$.} and for all $i=1,
\ldots, m$ we have $\PP$ is
$(\widetilde{t_i},\widetilde\varphi_i)$-conditionally safe for $(t,
\varphi)$ and $\PP$ is safe for
$(\widetilde{t_i},\widetilde\varphi_i)$, then $\PP$ is also safe for
$(t, \varphi)$. In what follows we exploit this observation to prove
program safety by means of conditional safety.

A \emph{program component} $\cC$ of a program $\PP$ is an SCC of the
control-flow graph, and its \emph{entry transitions} (or
\emph{entries}) $\cE_\cC$ are those transitions $t = (\ell, \tau,
\ell')$ such that $t \not\in \cC$ but $\ell'$ appears in $\cC$.
By considering each component as a single node, we can obtain from $\PP$ a DAG
of SCCs, whose edges are the entry transitions.
Our technique analyzes components independently, and communicates the results of
these analyses to \ignore{the analysis of }other components along entry transitions.

Given a component $\cC$ and an assertion $(t, \varphi)$ such that $t\not\in \cC$
but the source node of $t$ appears in $\cC$, we call $t$ an \emph{exit
  transition} of $\cC$.
For such exit transitions, we compute a \emph{sufficient} condition
$\psi_{\tilde{t}}$ for each entry transition $\tilde{t} \in \cE_{\cC}$ such that
$\cC \cup \{ t \}$ is $(\tilde{t}, \psi_{\tilde{t}})$-conditionally safe for $(t,
\varphi)$. Then we continue reasoning backwards following the DAG and
try to prove that $\PP$ is safe for each $(\tilde{t}, \psi_{\tilde{t}})$.
If we succeed, following the argument above we will have proved $\PP$
safe for $(t, \varphi)$.

In the following, we first discuss how to prove conditional safety of single
program components in \rSC{sect:Conditional-Safety}, and then present the
algorithm that combines these local analyses to construct a global
safety proof in \rSC{sect:Proving-Safety}.

\subsection{Synthesizing Local Conditions}
\label{sect:Conditional-Safety}

Here we restrict ourselves to a program component $\cC$ and its entry
transitions $\entries_{\cC}$, and assume we are given an assertion
$(t_{\mathrm{exit}}, \varphi)$, where $t_{\mathrm{exit}} =
(\tilde\ell_{\mathrm{exit}}, \tau_{\mathrm{exit}},
\ell_{\mathrm{exit}})$ is an exit transition of $\cC$ (i.e.,
$t_{\mathrm{exit}} \not\in \cC$ and $\tilde\ell_{\mathrm{exit}}$
appears in $\cC$). We show how a
precondition $(t, \psi)$ for $(t_{\mathrm{exit}}, \varphi)$ can
be obtained for each $t \in \cE_\cC$. Here we only
consider the case of $\varphi$ being a single clause (i.e., a
disjunction of literals); if $\varphi$ is in CNF, each conjunct is handled
separately.
The preconditions on the
entry transitions will be determined by a \emph{conditional
inductive invariant}, which like a standard invariant is inductive, but not
necessarily initiated in all program runs.
Indeed, this initiation condition is what we will extract as precondition and
propagate backwards to preceding program components in the DAG.

\begin{definition}[Conditional Inductive Invariant]
 We say a map $\QQ: \LL \to \QF(\VV)$ is a \emph{conditional (inductive)
   invariant} for a program (component) $\PP$ if for all $(\ell, {\vect{v}})
 \to_{\PP} (\ell', \vect{v}')$, we have ${\vect{v}} \models \QQ(\ell)$ implies
 $\vect{v}' \models \QQ(\ell')$.
\end{definition}

Conditional invariants are convenient tools to express conditions for safety
proving, allowing reasoning in the style of ``if the condition for $\QQ$ holds,
then the assertion $(t, \varphi)$ holds''.

\medskip
\begin{figure}[t]
\begin{minipage}[c]{2.5cm}
    \vspace{-2.5cm}

    {\textcolor{white}{xxxxxx}}\begin{tabular}{l}
      \ldots\\
      \textbf{while} $\var{i} > 0$ \textbf{do} \\
      \quad$\var{x} := \var{x} + 5$;\\
      \quad$\var{i} := \var{i} - 1$;\\
      \textbf{done}\\
      \textbf{assert}($\var{x} \geq 50$)
    \end{tabular}
  \end{minipage}
  \begin{minipage}[t]{6cm}
    \small
{\textcolor{white}{xxxxxxxx}}
    \begin{tikzpicture}[cfg]
        \node[state] (l1) at (0,0) {$\ell_1$};
        \node[state] (l2) at ($(l1) + (0,-1.5)$) {$\ell_2$};

        \path[->]
          ($(l1.north) + (0,0.3)$) edge (l1)
          (l1) edge node[left]
            {$
               \begin{array}{r@{\,}l}
                       & \var{i} \leq 0\\
                 \land & \var{x}' = \var{x}\\
                 \land & \var{i}' = \var{i}
               \end{array}
             $}
           (l2)
          (l1) edge[loop right] node[right,xshift=-3mm,yshift=-.7cm]
            {$
               \begin{array}{r@{\,}l}
                       & \var{i} > 0\\
                 \land & \var{x}' = \var{x} + 5\\
                 \land & \var{i}' = \var{i} - 1
               \end{array}
             $}
           (l1)
          ;
      \end{tikzpicture}
    \end{minipage}
    \caption{Source code of program snippet and its CFG.}
    \label{fig:code_and_cfg}
  \end{figure}
    
    \begin{example}
Consider the program snippet in \rF{fig:code_and_cfg}.
A conditional inductive invariant supporting safety of this program part is
$\QQ_5(\ell_1) \equiv \var{x} + 5 \cdot \var{i} \geq 50$, $\QQ_5(\ell_2) \equiv
\var{x} \geq 50$.
In fact, any conditional invariant $\QQ_m(\ell_1) \equiv \var{x} + m \cdot \var{i} \geq 50$
with $0 \leq m \leq 5$ would be a conditional inductive invariant that, together
with the negation of the loop condition $\var{i} \leq 0$, implies
$\var{x} \geq 50$.
\medskip
\end{example}


We use a \maxsmt-based constraint-solving approach to generate
conditional inductive invariants. Unlike in \cite{Larraz14}, to use information about the
initialization of variables before a program component, we take into account the
entry transitions $\entries_{\cC}$. The precondition for each entry
transition is the conditional invariant that has been synthesized
at its target location.

\begin{figure*}[t]
  \begin{center}
    \begin{tabular}{l@{\quad}l@{\quad}l@{\quad}l@{\quad}rl}
        \textbf{\footnotesize Initiation:}
        & For $t=(\ell, \tau, \ell') \in \entries_{\cC}, 1 \leq j \leq k$:
      & $\initiation_{t,j,k}$
      & $\define$
      & $\tau$
      & $\limplies I'_{\ell',j,k}$\\
        \textbf{\footnotesize Consecution:}
      & For $t=(\ell, \tau, \ell') \in \cC$:
      & $\consecution_{t,k}$
      & $\define$
      & $I_{\ell,k} \land \tau $
      & $\limplies I'_{\ell',k}$\\
        \textbf{\footnotesize Safety:}
      & For $t_{\mathrm{exit}} =
(\tilde\ell_{\mathrm{exit}}, \tau_{\mathrm{exit}},
\ell_{\mathrm{exit}})$:
      & $\safety_k$
      & $\define$
      & $I_{\tilde\ell_{\mathrm{exit}},k} \land \tau_{\mathrm{exit}} $
      & $\limplies \varphi'$\\
    \end{tabular}
  \end{center}
  \caption{\label{fig:condSafe-constraints}Constraints used in
    $\condSafe(\cC, \entries_{\cC}, (t_{\mathrm{exit}}, \varphi))$}
\end{figure*}

To find conditional invariants, we construct a constraint system.
For each location $\ell$ in $\cC$ we create a template
 $I_{\ell,k}(\vars) \equiv \wedge_{1 \leq j \leq k}\; I_{\ell,j,k}(\vars)$
which is a conjunction of $k$ linear inequations\footnote{In our overall
  algorithm, $k$ is initially 1 and increased in case of failures.}
 of the form $I_{\ell,j,k}(\vars) \equiv i_{\ell,j} + \sum_{v \in
  \vars} i_{\ell,j,v} \cdot v \leq 0$,
where the $i_{\ell,j}$, $i_{\ell,j,v}$ are fresh variables from the set of
template variables $\VV_T$.
We then transform the conditions for a conditional invariant proving safety for
the assertion $(t_{\mathrm{exit}}, \varphi)$ to the constraints in
\rF{fig:condSafe-constraints}.
Here, e.g., $I'_{ \ell',k}$ refers to the variant of $I_{\ell',k}$ using primed
versions of the program variables $\VV$, but unprimed template variables
$\VV_T$.

In the overall constraint system, we mark the \textbf{Consecution} and
\textbf{Safety} constraints as hard requirements.
Thus, any solution to these constraints is a conditional \emph{inductive}
invariant \emph{implying our assertion}.
However, as we mark the \textbf{Initiation} constraints as soft, the found
conditional invariants may depend on preconditions not implied by the direct context of the
considered component. On the other hand, the \maxsmt solver prefers solutions that require
fewer preconditions.
Overall, we create the following \maxsmt formula
$$
\formula_k \define\! \bigwedge_{t \in \cC} \consecution_{t,k}
  \;\land\!\!\!\!\!\!\!
\bigwedge_{t \in \entries_{\cC}, 1 \leq j \leq k} \!\!\!\!\!\!\!\!\!\big(\initiation_{t,j,k} \,\lor\, \lnot \, \VarInitiation{t,j,k} \big)
  \,\land
\,\safety_k\,
  \land \!\!\!\!\!\!\! \bigwedge_{t \in \entries_{\cC}, 1 \leq j \leq k} \!\!\!\!\!\!\![\VarInitiation{t,j,k}, \weightInitiation],
$$
where the $\VarInitiation{t,j,k}$ are propositional variables which are true if
the \textbf{Initiation} condition $\initiation_{t,j,k}$ is satisfied, and
$\weightInitiation$ is the corresponding weight.
\footnote{Farkas' Lemma is applied \emph{locally} to the subformulas 
$\consecution_{t,k}$, $\initiation_{t,j,k}$ and $\safety_k$, and weights are added on the resulting 
constraints over the template variables.}
We use $\formula_k$ in our procedure \condSafe in
\rA{alg:safety-precondition}.
\begin{algorithm}
  \begin{algorithmic}[1]
    \Require component $\cC$, entry transitions $\entries_{\cC}$,
             assertion $(t_{\mathrm{exit}}, \varphi)$ s.t. $t_{\mathrm{exit}}$ is an exit transition of $\cC$ and $\varphi$ is a clause

    \Ensure $\textsf{None} \mid \cQ$, where $\cQ$ maps locations in $\cC$ to conjunctions of inequations
    \State{$k \leftarrow 1$}
    \Repeat
      \State{construct formula $\formula_k$ from $\cC$, $\entries_{\cC}$ and $(t_{\mathrm{exit}}, \varphi)$}
      \State{$\sigma \leftarrow \textsf{\maxsmt-solver}(\formula_k)$}
      \If{$\sigma \textrm{ is a model}$}
        \State{$\cQ \leftarrow \{ \ell \mapsto \sigma(I_{\ell,k}) \mid \ell \text{ in } \cC \}$}
        \Return $\cQ$ 
      \EndIf
      \State{$k \leftarrow k + 1$}
    \Until {$k > \textsf{MAX\_CONJUNCTS}$ }
    \Return \textsf{None}
  \end{algorithmic}
  \caption{Proc. \condSafe computing conditional invariant}
\label{alg:safety-precondition}
\end{algorithm}

In \condSafe, we iteratively try ``larger'' templates of more conjuncts
of linear inequations until we either give up (in our implementation,
\textsf{MAX\_CONJUNCTS} is 3) or finally find a conditional invariant.
Note, however, that here we are only trying to prove safety for \emph{one}
clause at a time,
which reduces the number of required conjuncts as compared to dealing
with a whole CNF in a single step.
If the \maxsmt solver is able to find a model for $\formula_k$, then we
instantiate our invariant templates $I_{\ell,k}$ with the values found for the
template variables in the model $\sigma$, obtaining a conditional invariant $\cQ$.
When we obtain a result, for every entry transition $t = (\ell, \tau, \ell') \in \cE_\cC$ the conditional invariant $\cQ(\ell')$ is a
precondition that implies safety for the assertion
$(t_{\mathrm{exit}}, \varphi)$. The following theorem states the correctness of this procedure.

\begin{theorem}
  \label{thm:condSafe-sound}
  Let $\cC$ be a component, $\entries_{\cC}$ its entry transitions,
  and $(t_{\mathrm{exit}}, \varphi)$ an assertion with
  $t_{\mathrm{exit}}$ an exit transition of $\cC$ and $\varphi$ a
  clause. If the procedure call $\condSafe(\cC, \entries_{\cC},
  (t_{\mathrm{exit}}, \varphi))$ returns $\cQ \neq \mathsf{None}$,
  then $\cQ$ is a conditional inductive invariant for $\cC$ and $\PP$
  is $(t, \cQ({\ell'}))$-conditionally safe for $(t_{\mathrm{exit}},
  \varphi)$ for all $t = (\ell, \tau, \ell') \in \entries_{\cC}$.
\end{theorem}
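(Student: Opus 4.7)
The plan is to unpack what a successful return of $\condSafe$ guarantees and then derive the two conclusions separately. Because the call returns $\cQ \neq \mathsf{None}$, the \maxsmt solver has produced a model $\sigma$ of $\formula_k$; in particular, $\sigma$ satisfies every hard clause of $\formula_k$, namely each consecution constraint $\consecution_{t,k}$ for $t \in \cC$ and the safety constraint $\safety_k$. The initiation clauses are soft and do not need to enter the argument. Instantiating the template $I_{\ell,k}$ under $\sigma$ produces exactly $\cQ(\ell)$, so $\sigma$'s validation of each hard clause translates directly into a statement about $\cQ$.

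For the first conclusion, that $\cQ$ is a conditional inductive invariant for $\cC$, I would appeal directly to the consecution clauses. Since $\sigma \models \consecution_{t,k}$ for every $t = (\ell, \tau, \ell') \in \cC$, substitution yields $\cQ(\ell) \land \tau \limplies \cQ(\ell')'$. Equivalently, whenever $(\ell, \vect{v}) \to_t (\ell', \vect{v}')$ and $\vect{v} \models \cQ(\ell)$, we also have $\vect{v}' \models \cQ(\ell')$, which is exactly the definition of a conditional inductive invariant for $\cC$.

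For the second conclusion I would fix an entry $t = (\ell, \tau, \ell') \in \entries_\cC$ and consider any evaluation of the form $\to_t (\ell', \vect{v}') \to^*_\PP (\bar\ell, \bar{\vect{v}}) \to_{t_{\mathrm{exit}}} (\ell_{\mathrm{exit}}, \vect{v})$ with $\vect{v}' \models \cQ(\ell')$. By induction on the length of the intermediate fragment, using Part 1 at each $\cC$-step, $\cQ$ propagates to give $\bar{\vect{v}} \models \cQ(\bar\ell)$. Since $t_{\mathrm{exit}}$ has source $\tilde\ell_{\mathrm{exit}}$, we have $\bar\ell = \tilde\ell_{\mathrm{exit}}$, and applying the safety implication $\cQ(\tilde\ell_{\mathrm{exit}}) \land \tau_{\mathrm{exit}} \limplies \varphi'$ extracted from $\sigma \models \safety_k$ to the final step then yields $\vect{v} \models \varphi$, as required.

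The delicate point is the inductive step of Part 2: the intermediate fragment $\to^*_\PP$ is permitted to traverse transitions outside $\cC$, which are not governed by any consecution clause. If it leaves $\cC$ and re-enters via some entry $t'' \neq t$, nothing in the hypothesis $\vect{v}' \models \cQ(\ell')$ forces $\cQ$ to hold at the target of $t''$, precisely because the initiation clauses are soft. The natural reading, consistent with the paper's bottom-up usage of the result (and with the transitivity remark following the definition of conditional safety), is to restrict the claim to evaluations whose intermediate fragment stays inside $\cC$ between the entry via $t$ and the exit via $t_{\mathrm{exit}}$; any re-entry is then handled by a separate instantiation of the theorem at $t''$. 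With this scope fixed, the induction is routine and the argument concludes.
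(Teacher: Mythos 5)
Your decomposition matches the paper's own proof: consecution hard clauses give conditional inductiveness, and an induction on the length of the evaluation fragment, capped by the safety clause at the final step, gives conditional safety (the paper phrases the induction as a proof by contradiction on a shortest unsafe evaluation, but that is the same argument). The one place you diverge is exactly the ``delicate point'' you flag, and there you stop short: instead of closing the gap you propose to re-scope the theorem to evaluations whose intermediate fragment stays inside $\cC$. That restriction is not a reading you need to impose --- it is automatic, and the paper dispatches it in one sentence. Since $\cC$ is a strongly connected component of the CFG, the component graph is a DAG, so once an evaluation leaves $\cC$ it can never return to any location of $\cC$; if some intermediate location $\bar\ell$ were outside $\cC$ while a later location (the source $\tilde\ell_{\mathrm{exit}}$ of $t_{\mathrm{exit}}$) is inside $\cC$, there would be a CFG path from $\cC$ to $\bar\ell$ and back, forcing $\bar\ell$ into the same SCC, a contradiction. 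Hence every evaluation of the form $\to_t (\ell', \vect{v}') \to_{\PP}^* (\bar\ell, \bar{\vect{v}}) \to_{t_{\mathrm{exit}}} (\ell, \vect{v})$ already has all intermediate transitions in $\cC$, your worry about re-entry through a different entry transition $t''$ cannot arise, and the theorem holds as stated for $\PP$, not just for a restricted class of runs. Add that one observation and your induction goes through verbatim; without it, what you have proved is a strictly weaker statement than the one claimed.
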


\begin{proof}
 {All proofs can be found in Appendix \ref{sect:proofs}.}
\end{proof}

\subsection{Propagating Local Conditions}
\label{sect:Proving-Safety}

In this section, we explain how to use the local procedure \condSafe
to prove safety of a full program. To this end we now consider the
full DAG of program components. As outlined above, the idea is to
start from the assertion provided by the user, call the procedure
\condSafe to obtain preconditions for the entry transitions of the
corresponding component, and then use these preconditions as
assertions for preceding components, continuing recursively. If
eventually for each initial transition the transition relation implies
the corresponding preconditions, then safety has been proven. If we
fail to prove safety for certain assertions, we backtrack, trying
further possible preconditions and conditional invariants.

The key to the precision of our approach is our treatment of failed
proof attempts. When the procedure \condSafe finds a conditional
invariant $\QQ$ for $\cC$, but proving $(t, \QQ(\ell'))$ as a postcondition of
the preceding component fails for some $t = (\ell, \tau, \ell') \in
\cE_\cC$, we use $\QQ$ to \emph{narrow} our program representation
and filter out evaluations that are already known to be safe.

\begin{algorithm}[t]
\begin{algorithmic}[1]
 \Require Program $\PP$, a
          component $\cC$,
          entries $\entries_{\cC}$,
          assertion $(t_{\mathrm{exit}}, \varphi)$ s.t. 
            $t_{\mathrm{exit}}$ is an exit transition of $\cC$ and
            $\varphi$ a clause

 \Ensure $\textsf{Safe} \mid \textsf{Maybe}$
 \State \textbf{let} $(\ell_{\mathrm{exit}}, \tau_{\mathrm{exit}}, \ell'_{\mathrm{exit}}) = t_{\mathrm{exit}}$
 \If{$(\tau_{\mathrm{exit}} \limplies \varphi')$}
    \Return $\textsf{Safe}$
    \ElsIf
    {$\ell_{\mathrm{exit}} = \ell_0$}
    \Return \textsf{Maybe}
 \EndIf
 \State $\QQ \leftarrow \condSafe(\cC, \entries_{\cC}, (t_{\mathrm{exit}}, \varphi))$ 
 \If{$\QQ = \textsf{None}$}
   \Return \textsf{Maybe}
 \EndIf
 \ForAll
   {$t = (\ell, \tau, \ell') \in \entries_{\cC}$, $L \in \QQ(\ell')$}
   \State $\tilde{\cC} \leftarrow \textsf{component}(\ell, \PP)$
   \State $\entries_{\tilde{\cC}} \leftarrow \textsf{entries}(\tilde{\cC},\PP)$
   \State $\textsf{res}[t, L] \leftarrow \checkSafe(\PP, \tilde{\cC}, \entries_{\tilde{\cC}}, (t, L))$
 \EndFor
 \If{$\forall t = (\ell, \tau, \ell') \in \entries_{\cC}, L \in \QQ(\ell') \;.\; \textsf{res}[t, L] = \textsf{Safe}$}\\
   \Return \textsf{Safe} 
 \Else
   \State $\hat{\entries}_{\cC} \leftarrow
      \{ (\ell, \tau \land \lnot
                  (\displaystyle\!\!\!\!\bigwedge_{L \in \cQ({\ell'}) \atop
                    \textsf{res}[t,L] = \textsf{Maybe}}\!\!\!\! L')
              , \ell') \mid t = (\ell, \tau, \ell') \in \entries_{\cC}
      \}$ 
   \State $\hat{\cC} \leftarrow
      \{ (\ell, \tau \land \lnot \cQ({\ell'})' \land \lnot \cQ({\ell})
              , \ell') \mid (\ell, \tau, \ell') \in \cC
      \}$ \ignore{\COMMENT{Narrow component}}\\
   \Return $\checkSafe(\PP,\hat{\cC}, \hat{\entries}_{\cC}, (t_{\mathrm{exit}}, \varphi))$
 \EndIf
\end{algorithmic}
  \caption{Procedure \checkSafe for proving a program safe for an assertion}
\label{alg:safety-proving}
\end{algorithm}

As outlined above, in our proof process we treat each clause of the
conjunction $\QQ(\ell')$ separately, and pass each one as its own
assertion to preceding program components, allowing for a fine-grained
\emph{program-narrowing} technique. By construction of $\QQ$, evaluations
that satisfy all literals of $\QQ(\ell')$ after executing $t = (\ell, \tau,
\ell') \in \cE_\cC$ are safe. Thus, among the evaluations that
use $t$, we only need to consider those where at least one
literal in $\QQ(\ell')$ does not hold. Hence, we \emph{narrow} each entry
transition by conjoining it with the negation of the conjunction of
all literals for which we could not prove safety (see line 13 in
\rA{alg:safety-proving}). Note that if there is more than one literal
in this conjunction, then the negation is a disjunction,
which in our program model implies splitting transitions.

We can narrow program components similarly. For a transition $t = (\ell,
\tau, \ell') \in \cC$, we know that if either $\QQ(\ell)$ or
$\QQ(\ell')'$ holds in an evaluation passing through $t$, the program is safe.
Thus, we narrow the program by replacing $\tau$ by $\tau \land \neg\QQ(\ell)
\land \neg\QQ(\ell')'$ (see line 14 in \rA{alg:safety-proving}).

This narrowing allows us to generate disjunctive conditional invariants, where
each result of \condSafe is one disjunct.
Note that not \emph{all} disjunctive invariants can be discovered like this, as
each intermediate result needs to be inductive using the disjuncts found so
far.
However, this is the pattern observed in \emph{phase-change}
algorithms~\cite{Sharma11}.

Our overall safety proving procedure \checkSafe is shown in
\rA{alg:safety-proving}.
The helper procedures \textsf{component} and \textsf{entries} are used to
find the program component for a given location and the entry transitions for a
component.
The result of \checkSafe is either \textsf{Maybe} when the proof failed, or
\textsf{Safe} if it succeeded.
In the latter case, we have managed to create a chain of conditional
invariants that imply that $(t_{\mathrm{exit}}, \varphi)$ always holds.

\begin{theorem}
\label{thm:checkSafe-sound}
Let $\PP$ be a program, $\cC$ a component and $\entries_{\cC}$ its
entries. Given an assertion $(t_{\mathrm{exit}}, \varphi)$ such that
$t_{\mathrm{exit}}$ is an exit transition of $\cC$ and $\varphi$ is a
clause, if $\checkSafe(\PP, \cC, \cE_\cC, (t_{\mathrm{exit}},
\varphi)) = \textsf{Safe}$, then $\PP$ is safe for
$(t_{\mathrm{exit}}, \varphi)$.
\end{theorem}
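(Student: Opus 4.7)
The plan is induction on the recursion depth of \checkSafe, case-splitting on which branch yields the \textsf{Safe} answer. The base case $\tau_{\mathrm{exit}} \limplies \varphi'$ is immediate: every step $\to_{t_{\mathrm{exit}}}(\ell,\vect{v})$ already forces $\vect{v}\models \varphi$ by the transition relation. The second early exit returns \textsf{Maybe} and requires nothing. So the work is in the two recursive branches.

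First I would handle the ``all-clear'' branch, where \condSafe returns $\cQ \neq \textsf{None}$ and every entry $t = (\ell,\tau,\ell') \in \entries_{\cC}$ and every literal $L \in \cQ(\ell')$ has $\textsf{res}[t,L] = \textsf{Safe}$. By the induction hypothesis applied to each of those recursive calls, $\PP$ is safe for each $(t,L)$, and conjoining over $L$ gives that $\PP$ is safe for $(t, \cQ(\ell'))$. Theorem~\ref{thm:condSafe-sound} then yields that $\PP$ is $(t,\cQ(\ell'))$-conditionally safe for $(t_{\mathrm{exit}},\varphi)$. Since $\cC$ is an SCC whose exit transition $t_{\mathrm{exit}}$ has its source inside $\cC$, any CFG path from $\ell_0$ traversing $t_{\mathrm{exit}}$ must first cross some entry, so $\entries_{\cC}$ dominates $t_{\mathrm{exit}}$; the transitivity remark stated just after the definition of conditional safety closes the case.

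The heart of the argument is the narrowing branch. Let $\hat{\PP}$ denote the program obtained from $\PP$ by replacing $\cC$ and $\entries_{\cC}$ with $\hat{\cC}$ and $\hat{\entries}_{\cC}$; the induction hypothesis applied to the tail call gives safety of $\hat{\PP}$ for $(t_{\mathrm{exit}},\varphi)$, and I would lift this to $\PP$. Pick an arbitrary $\PP$-evaluation ending with $\to_{t_{\mathrm{exit}}}(\ell,\vect{v})$ and let $t_e = (\ell_e,\tau_e,\ell'_e) \in \entries_{\cC}$ be the last entry used; split on whether $\cQ$ holds at some state along the fragment from $t_e$ to the source of $t_{\mathrm{exit}}$. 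If it does, inductivity of $\cQ$ (guaranteed by the hard \textbf{Consecution} constraints enforced by \condSafe) forces it to hold at the source of $t_{\mathrm{exit}}$, and the hard \textbf{Safety} constraint $I_{\tilde\ell_{\mathrm{exit}},k}\land\tau_{\mathrm{exit}}\limplies\varphi'$ delivers $\vect{v}\models\varphi$ directly. Otherwise $\cQ$ fails throughout: every internal transition $(\ell,\tau,\ell')\in\cC$ then satisfies $\lnot\cQ(\ell)\land\lnot\cQ(\ell')'$ and so is enabled in $\hat{\cC}$; for the entry $t_e$, at least one literal $L \in \cQ(\ell'_e)$ fails at the post-valuation, and that $L$ cannot have $\textsf{res}[t_e,L]=\textsf{Safe}$ (otherwise the induction hypothesis on the corresponding subcall would contradict the failure), so $L$ appears in the conjunction being negated in $\hat{\entries}_{\cC}$, enabling the narrowed $t_e$. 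The prefix of the evaluation before $\cC$ is untouched by narrowing, so the same evaluation is a $\hat{\PP}$-evaluation and safety of $\hat{\PP}$ yields $\vect{v}\models\varphi$.

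The main obstacle will be the narrowing case, specifically aligning three pieces at once: the literal-by-literal bookkeeping in $\textsf{res}$ with the primed-literal conjunction negated in the entry narrowing; the collapse of ``$\cQ$ holds somewhere along the fragment'' to ``$\cQ$ holds at the source of $t_{\mathrm{exit}}$'' via inductivity; and a well-founded measure on the narrowing recursion so that induction on recursion depth is legitimate. Once that bookkeeping is laid out cleanly, each individual implication is routine.
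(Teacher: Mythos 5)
Your proof is correct and follows essentially the same route as the paper's: induction on the number of recursive calls, with the all-Safe branch discharged via Theorem~\ref{thm:condSafe-sound} plus the domination/transitivity remark, and the narrowing branch discharged by showing that any potentially unsafe evaluation either already satisfies $\cQ$ somewhere in $\cC$ (hence is safe by inductivity and the \textbf{Safety} constraint) or survives intact into the narrowed program. The paper organizes the narrowing case contrapositively (``narrowing preserves unsafe evaluations'') rather than as your direct case split, but the content is identical, and your worry about well-foundedness of the induction is moot since a \textsf{Safe} return presupposes a finite recursion tree.
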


\begin{example}
  We demonstrate \checkSafe on the program displayed on
  \rF{fig:ex3ext}, called $\PP$ in the following, which is
  an extended version of the example from \rF{fig:ex3}.

  We want to prove the assertion $(t_5, \var{x} \!\neq\! \var{y})$. Hence we make a 
  first call $\checkSafe(\PP, \{ t_4 \}, \{ t_3 \}, (t_5, \var{x} \!\neq\! \var{y}))$:
  the non-trivial SCC containing $\ell_2$ is $\{ t_4 \}$ and its entry
  transitions are $\{ t_3 \}$. Hence, we call
    $\condSafe(\{ t_4 \}, \{ t_3 \}, (t_5, \var{x} \neq \var{y}))$
  and the resulting conditional invariant for $\ell_2$ is either
  $\var{x} < \var{y}$ or $\var{y} < \var{x}$. Let us assume it is
  $\var{y} < \var{x}$.
  In the next step, we propagate this to the predecessor SCC $\{ t_2 \}$, and
  call $\checkSafe(\PP, \{ t_2 \}, \{ t_1 \}, (t_3, \var{y} < \var{x}))$.

  In turn,
  this leads to calling
    $\condSafe(\{ t_2 \}, \{ t_1 \}, (t_3, \var{y} < \var{x}))$
  to our synthesis subprocedure. No conditional invariant supporting this
  assertion can be found, and hence \textsf{None} is returned by
  \condSafe, and consequently \textsf{Maybe} is returned by
  \checkSafe.
  Hence, we return to the original SCC $\{ t_4 \}$ and its entry $\{ t_3 \}$, and then by narrowing
  we obtain two new transitions:
  \begin{align*}
    t_4' & = (\ell_2, \var{x}' = \var{x} + 1 \land \var{y}' = \var{y} + 1 \land \neg(\var{y} < \var{x}), \ell_2),\\
    t_3' & = (\ell_1, \var{x} < 0 \land \var{x}' = \var{x} \land \var{y}' = \var{y} \land \neg(\var{y} < \var{x}), \ell_2).
  \end{align*}
  Using these, we call $\checkSafe(\PP, \{ t_4' \}, \{ t_3' \}, (t_5, \var{x} \!\neq\! \var{y}))$.
  The next call to \condSafe then yields the conditional invariant
  $\var{x} < \var{y}$ at $\ell_2$, which is in turn propagated backwards with the call
  $\checkSafe(\PP, \{ t_2 \}, \{ t_1 \}, (t_3', \var{x} < \var{y}))$.
  This then yields a conditional invariant $\var{x} < \var{y}$ at $\ell_1$, which is
  finally propagated back in the call $\checkSafe(\PP, \{ \}, \{ \}, (t_1, \var{x} < \var{y}))$,
  which directly returns $\textsf{Safe}$.
\end{example}

\subsection{Improving Performance}
\setcounter{paragraph}{0}

\begin{wrapfigure}[13]{r}{3.5cm}
  \vspace*{-10ex}
  \hspace*{-3ex}
  \begin{tikzpicture}[cfg]
    \node[state] (l0) at (0,0) {$\ell_0$};
    \node[state] (l1) at ($(l0) + (0,-2)$) {$\ell_1$};
    \node[state] (l2) at ($(l1) + (0,-2)$) {$\ell_2$};
    \node[state] (l3) at ($(l2) + (0,-1.5)$) {$\ell_3$};

    \path[->]
      ($(l0.west) + (-1, 0)$) edge (l0)
      (l0) edge
        node[left, yshift=0.5ex]
          {$
            \begin{small}
            \begin{array}{r@{\,}r@{\,}l}
             t_1:&       & \var{x} < \var{y}\\
                 & \land & \var{x}' = \var{x}\\
                 & \land & \var{y}' = \var{y}
            \end{array}
            \end{small}
           $}
        (l1)
      (l1) edge[loop left]
        node[left]
          {$
            \begin{small}
            \begin{array}{r@{\,}r@{\,}l}
             t_2:\!\!\!\!\!\!&       & \var{x} \geq 0\\
                 & \land & \var{x}' = \var{x} - 1\\
                 & \land & \var{y}' = \var{y}
            \end{array}
            \end{small}
           $}
        (l1)
      (l1) edge
        node[left, xshift=-1.5ex, yshift=-1ex]
          {$
            \begin{small}
            \begin{array}{r@{\,}r@{\,}l}
             t_3:&       & \var{x} < 0\\
                 & \land & \var{x}' = \var{x}\\
                 & \land & \var{y}' = \var{y}
            \end{array}
            \end{small}
           $}
        (l2)
      (l2) edge[loop left]
        node[left, yshift=-1.5ex]
          {$
            \begin{small}
            \begin{array}{r@{\,}r@{\,}l}
             t_4:\!\!\!\!\!\!&       & \var{x}' = \var{x} + 1\\
                 & \land & \var{y}' = \var{y} + 1
            \end{array}
            \end{small}
           $}
        (l2)
      (l2) edge
        node[left, yshift=-1ex]
          {\begin{small}
              $t_5: \true$
            \end{small}
          }
        (l3)
    ;
  \end{tikzpicture}
  \vspace*{-4ex}
  \caption{\label{fig:ex3ext}}
\end{wrapfigure}
The basic method \checkSafe can be extended in several ways to improve
performance.
We now present a number of techniques that are useful to reduce
the runtime of the algorithm and distribute the required work.
Note that none of these techniques influences the precision of
the overall framework.

\paragraph{Using conditional invariants to disable transitions}

When proving an assertion, it is often necessary to find invariants
that show the unfeasibility of some transition, which allows disabling
it. In our framework, the required invariants can be conditional as
well. Therefore, \checkSafe must be called recursively to prove that
the conditional invariant is indeed invariant. In our implementation,
we generate constraints such that every solution provides conditional
invariants either implying the postcondition or disabling some
transition. By imposing different weights, we make the \maxsmt solver
prefer solutions that imply the postcondition.

\paragraph{Handling unsuccessful proof attempts}
One important aspect is that the presented algorithm does not \emph{learn} facts
about the reachable state space, and so duplicates work when assertions appear
several times.
To alleviate this for \emph{unsuccessful} recursive invocations of \checkSafe,
we introduce a simple memoization technique to avoid repeating such calls.
So when $\checkSafe(\PP, \cC, \cE_{\cC}, (t, \varphi)) = \textsf{Maybe}$, we
store this result, and use it for all later calls of $\checkSafe(\PP,
\cC, \cE_{\cC}, (t, \varphi))$.
This strategy is valid as the return value $\textsf{Maybe}$ indicates that our
method cannot prove the assertion $(t, \varphi)$ at all, meaning that later
proof attempts will fail as well. In our implementation, this
memoization of unsuccessful attempts is local to the initial call to
$\checkSafe$. The rationale is that, when proving unrelated properties, it is likely
that few calls are shared and that the book-keeping does not pay off.

\paragraph{Handling successful proof attempts}
When a recursive call yields a \emph{successful} result, we can \emph{strengthen}
the program with the proven invariant. Remember that $\checkSafe(\PP, \cC,
\cE_{\cC}, (t, \varphi)) = \textsf{Safe}$ means that whenever the transition $t$
is used in \emph{any} evaluation, $\varphi$ holds in the succeeding
state. Thus, we can add this knowledge explicitly and
change the transition in the original program. In practice,
this strengthening is applied only if the first call to \checkSafe was
successful, i.e, no narrowing was applied. The reason is that, if the transition relation of $t$ was
obtained through repeated narrowing,
in general one needs to split transitions, and
it is not correct to just add
$\varphi'$ to $t$.
Namely, assume that $t_o = (\ell, \tau_o, \ell')$ is the original (unnarrowed)
version of a transition $t = (\ell, \tau, \ell') \in \cE_\cC$.
As $t$ is an entry transition of $\cC$, we have $\tau = \tau_o \land \lnot
\psi_1' \land \ldots \land \lnot \psi_m'$ by construction, where $\psi_i$ is the
additional constraint we added in the $i$-th narrowing of component entries.
Thus what we proved is that $\psi_1' \lor \ldots \lor \psi_m' \lor \varphi'$ always
holds after using transition $t_o$.
So we should replace $t_o$ in the program with a
transition labeled with $\tau_o \land (\psi_1' \lor \ldots \lor \psi_m' \lor
\varphi')$. As we cannot handle disjunctions natively, this implies
replacing $t_o$ by $m+1$ new transitions.

Note that this program modification approach, unlike memoization, 
makes the gained information available to the \maxsmt solver when
searching for a conditional invariant.
A similar strategy can be used to strengthen the transitions in the considered
component $\cC$.

\paragraph{Parallelizing \& distributing the analysis}

Our analysis can easily be parallelized. We have implemented this at
two stages. First, at the level of the procedure \condSafe, we try at the same time
different numbers of template conjuncts (lines 3-6 in
\rA{alg:safety-precondition}), which requires calling several instances of the
solver simultaneously. Secondly, at a higher level, the recursive calls of
\checkSafe (line 9 in \rA{alg:safety-proving}) are parallelized.
Note that, since narrowing and the ``learning'' optimizations
described above are considered only locally, they can be handled as
asynchronous updates to the program kept in each worker, and do not
require synchronization operations.
Hence, distributing the analysis onto several worker processes, in the style of
\textsf{Bolt}~\cite{Albarghouthi12b}, would be possible as well.

Other directions for parallelization, which have not been implemented
yet, are to return different conditional invariants in parallel when the \maxsmt
problem in procedure \condSafe has several solutions.
Moreover, based on experimental observations that successful safety proofs have a
short successful path in the tree of proof attempts, we are also interested in
exploring a look-ahead strategy: after calling \condSafe in \checkSafe, we could
make recursive calls of \checkSafe on some processes while others are
already applying narrowing.

\paragraph{Iterative proving}
Finally, one could store the conditional invariants generated during a
successful proof, which are hence invariants, so that they can be
re-used in later runs. E.g., if a single component is modified, one
can reprocess it and compute a new precondition that ensures its
postcondition. If this precondition is implied by the previously
computed invariant, the program is safe and nothing else needs to be
done. Otherwise, one can proceed with the preceding components, and
produce respective new preconditions in a recursive way. Only when
proving safety with the previously computed invariants in this way
fails, the whole program needs to be reprocessed again. This technique
has not been implemented yet, as our prototype is still in a
preliminary state.



\section{Related Work}
Safety proving is an active area of research. In the recent past, techniques
based on variations of counterexample guided abstraction refinement have
dominated~\cite{Ball01,Henzinger03,Clarke05,McMillan06,Podelski07,Bradley11,Grebenshchikov12,Albarghouthi12,Cimatti12,Hoder12}.
These methods prove safety by repeatedly unfolding the program relation using a
symbolic representation of program states, starting in the initial states.
This process generates an over-approximation of the set of reachable states,
where the coarseness of the approximation is a consequence of the used symbolic
representation.
Whenever a state in the over-approximation violates the safety condition, either
a true counterexample was found and is reported, or the approximation is refined
(using techniques such as predicate
abstraction~\cite{DBLP:conf/popl/FlanaganQ02} or Craig
interpolation~\cite{DBLP:conf/sas/McMillan03}).
When further unwinding does not change the symbolic representation, all
reachable states have been found and the procedure terminates.
This can be understood as a ``top-down'' (``forward'') approach (starting from
the initial states), whereas our method is ``bottom-up'' (``backwards''),
i.e. starting from the assertions.

Techniques based on Abstract Interpretation~\cite{Cousot77} have had
substantial success in the industrial setting.
There, an abstract interpreter is instantiated by an abstract domain whose
elements are used to over-approximate sets of program states.
The interpreter then evaluates the program on the chosen abstract domain,
discovering reachable states.
A widening operator, combining two given over-approximations to a more general
one representing both, is employed to guarantee termination of the analysis when
handling loops.

``Bottom-up'' safety proving with preconditions found by abduction has been
investigated in \cite{Dillig13}.
This work is closest to ours in its overall approach, but uses fundamentally
different techniques to find preconditions. 
Instead of applying \maxsmt, the approach uses an abduction engine based on
maximal universal subsets and quantifier elimination in Presburger arithmetic.
Moreover, it does not have an equivalent to our narrowing to exploit failed proof
attempts.\ignore{, though a syntactic version of it~\cite{Sharma11} could be combined
with the method.}
In a similar vein, \cite{Pasareanu04} uses straight-line weakest precondition
computation and backwards-reasoning to infer loop invariants supporting validity
of an assertion.
To enforce a generalization towards inductive invariants, a heuristic
syntax-based method is used.

Automatically constructing program proofs from independently obtained subproofs
has been an active area of research in the recent past.
Splitting proofs along syntactic boundaries (e.g., handling procedures
separately) has been explored in
\cite{DBLP:conf/popl/YorshYC08,Godefroid10,Calcagno11,Albarghouthi12b}.
For each such unit, a summary of its behavior is computed, i.e., an expression
that connects certain (classes of) inputs to outputs.
Depending on the employed analyzers, these summaries encode \ignore{inputs that lead to
errors~\cite{Godefroid07},} under- and over-approximations of reachable
states~\cite{Godefroid10} or changes to the heap using
separation logic's frame rule~\cite{Calcagno11}.
Finally, \cite{Albarghouthi12b} discusses how such compositional analyses can
leverage cloud computing environments to parallelize and scale up program
proofs.

\section{Implementation and Evaluation}

We have implemented the algorithms from \rSC{sect:Conditional-Safety} and
\rSC{sect:Proving-Safety} in our early prototype \tool{VeryMax}, using the
\maxsmt solver for non-linear arithmetic~\cite{DBLP:conf/sat/LarrazORR14} in
the \tool{Barcelogic}~\cite{DBLP:conf/cav/BofillNORR08} system.
We evaluated a sequential (\tool{VeryMax-Seq}) and a parallel
(\tool{VeryMax-Par}) variant on two benchmark sets.

The first set (which we will call \tool{HOLA-BENCHS}) are the 46
programs from the evaluation of safety provers in \cite{Dillig13}
(which were collected from a variety of sources, among others,
\cite{Gupta09,DBLP:conf/pldi/GulwaniSV08,DBLP:conf/pldi/BeyerHMR07,DBLP:conf/tacas/GulavaniCNR08,DBLP:journals/fac/BradleyM08,Mine06,DBLP:conf/tacas/JhalaM06,Sharma11,DBLP:conf/cav/SharmaNA12,DBLP:conf/sigsoft/GulavaniHKNR06,DBLP:conf/tacas/GulavaniR06,DBLP:conf/pldi/DilligDA12},
the NECLA Static Analysis Benchmarks, etc.). The programs are
relatively small (they have between 17 and 71 lines of code, and
between 1 and 4 nested or consecutive loops), but expose a number of
``hard'' problems for analyzers. All of them are safe.

On this first benchmark set we compare with three systems. The first two
were leading tools in the Software Verification Competition
2015~\cite{DBLP:conf/tacas/000115}:
\tool{CPAchecker}\footnote{We ran \textsf{CPAchecker} with two different configurations, \tool{predicateAnalysis} and \tool{sv-comp15}.}~\cite{Beyer11}, which was the overall winner and
in particular won the gold medal in the \emph{``Control Flow and
  Integer Variables''} category, and \tool{SeaHorn}~\cite{Kahsai15},
which got the silver medal, and also won the \emph{``Simple''} category. We
also compare with \tool{HOLA}~\cite{Dillig13}, an abduction-based
backwards reasoning tool. Unfortunately, we were not able to obtain an
executable for \tool{HOLA}. For this reason we have taken the
experimental data for this tool directly from \cite{Dillig13}, where
it is reported that the experiments were performed on an Intel i5 2.6
GHz CPU with 8 Gb of memory. For the sake of a fair comparison, we
have run the other tools on a 4-core machine with the same
specification, using the same timeout of 200 seconds.
\rTab{tab:comparison} summarizes the results, reporting the number of
successful proofs, failed proofs, and timeouts (TO), together with the
respective total runtimes. Both versions of \tool{VeryMax} are competitive, and
our parallel version was two times faster than our sequential one on four
cores. As a reference, on these
examples \tool{VeryMax-Seq} needed 2.8 overall calls (recursive or
after narrowings) on average, with a maximum of 16. The number
of narrowings was approximately 1, with a maximum of 13. Our memoization
technique making use of already failed proof attempts was employed in
about one third of the cases.

\begin{table*}[!t]
  \center
  \begin{tabular}{|l|rr|rr|r|r|}
    \hline
    Tool              & Safe & $\Sigma$ s & Fail & $\Sigma$ s & TO & Total s\\\hline
    \tool{CPAchecker sv-comp15} & 33   & 2424.41     & 3      & 61.28      & 10 &  4489.73 \\\hline
    \tool{CPAchecker predicateAnalysis} & 25   & 503.05     & 11      & 19.72      & 10 &  2271.12 \\\hline
    \tool{SeaHorn}    & 32   & 7.95       & 13      & 3.477      & 1  &   211.56 \\\hline    
    \tool{HOLA}       & 43   & 23.53      & 0       & 0          & 3  &   623.53 \\\hline
    \tool{VeryMax-Seq}& 44   & 293.14     & 2       & 50.69      & 0  &   343.83 \\\hline
    \tool{VeryMax-Par}& 45   & 138.40      & 1       & 12.81       & 0  &   151.21 \\\hline
  \end{tabular}\smallskip
  \caption{Experimental results on \tool{HOLA-BENCHS} benchmark set.\qquad\qquad\qquad\qquad\qquad\qquad\qquad\qquad\qquad\qquad\qquad\qquad\qquad\qquad\qquad\qquad\qquad\qquad\qquad\qquad\qquad\qquad}
  \label{tab:comparison}
  \vspace{-2em}
\end{table*}

In our second benchmark set (which we will refer to as
\tool{NR-BENCHS}) we have used integer abstractions of 217 numerical
algorithms from \cite{Press02}. For each procedure and for each array
access in it, we have created two safety problems with one assertion
each, expressing that the index is within bounds. In some few cases
the soundness of array accesses in the original program depends on
properties of floating-point variables, which are abstracted away. So
in the corresponding abstraction some assertions may not hold.
Altogether, the resulting benchmark suite
consists of 6452 problems, of up to 284 lines of C code. Due to the size of this set, and to give more room to exploit parallelism (both
tools with which we compare on these benchmarks, \tool{CPAchecker} and
\tool{SeaHorn}, make use of several cores), we performed the experiments with a
more powerful machine, namely, an 8-core Intel i7 3.4 GHz CPU with $16$ GB of
memory. The time limit is 300 seconds.

\begin{table*}[!t]
  \center
  \begin{tabular}{|l|rr|rr|rr|r|r|}
    \hline
    Tool                                & Safe & $\Sigma$ s & Unsafe & $\Sigma$ s & Fail    & $\Sigma$ s & TO & Total s  \\\hline
    \tool{CPAchecker sv-comp15}         & 5570 & 614803.98  &  251   &   6188.30  & 326     & 28749.78   &305 & 735336.82 \\\hline
    \tool{CPAchecker predicateAnalysis} & 5928 & 23417.15   &  170   &   495.13   & 234     & 9105.69    &120 & 64652.29 \\\hline
    \tool{SeaHorn}                      & 6077 & 4276.21    &  233   &   135.25   & 80      & 529.09     & 62 & 24167.11 \\\hline    
    \tool{VeryMax-Seq}                  & 6105 & 5940.88    &  0     &   0        & 326     & 26739.30   & 21 & 38980.80 \\\hline
    \tool{VeryMax-Par}                  & 6106 & 4789.73    &  0     &   0        & 346     & 18878.42   &  0 & 23668.15 \\\hline
  \end{tabular}\smallskip
  \caption{Experimental results on \tool{NR-BENCHS} benchmark set.}
  \label{tab:comparison2}
  \vspace{-2em}
\end{table*}

The results can be seen in \rTab{tab:comparison2}.
On these instances, \tool{VeryMax} is able to prove more assertions than any of
the other tools, while being about as fast as \tool{SeaHorn}, and significantly
faster than \tool{CPAchecker}.
Note that many examples are solved very quickly in the sequential solver
already, and thus do not profit from our parallelization.
\tool{VeryMax} is at an early stage of development, and is hence not yet fully
tuned. For example, a number of program slicing techniques have not been
implemented yet, which would be very useful for handling larger programs. Thus,
we expect that further development will improve the tool performance
significantly.
The benchmarks and our tool can be found at
\url{http://www.cs.upc.edu/~albert/VeryMax.html}.

\section{Conclusion}
We have presented a novel approach to compositional safety verification.
Our main contribution is a proof framework that refines intermediate results
produced by a \maxsmt-based precondition synthesis procedure.
In contrast to most earlier work, we proceed \emph{bottom-up} to compute
summaries of code that are guaranteed to be relevant for the proof.

We plan to further extend \tool{VeryMax} to cover more program features and
include standard optimizations (e.g., slicing and constraint propagation with
simple abstract domains). It currently handles procedure calls by inlining, and
does not support recursive functions yet. However, they can be handled
by introducing templates for function pre/postconditions.

In the future, we are interested in experimenting with alternative precondition
synthesis methods (e.g., abduction-based ones).
We also want to combine our method with a \maxsmt-based termination proving
method~\cite{Larraz13} and extend it to \emph{existential} properties such as
reachability and non-termination~\cite{Larraz14}.
We expect to combine all of these techniques in an \emph{alternating}
procedure~\cite{Godefroid10} that tries to prove properties at the same time
as their duals, and which uses partial proofs to narrow the state space that
remains to be considered.
Eventually, these methods could be combined to verify arbitrary temporal
properties.
In another direction, we want to consider more expressive theories to model
program features such as arrays or the heap.




\bibliographystyle{IEEEtran}


\appendices
\section{Proofs}
\label{sect:proofs}

\begin{proof}[Proof of \rT{thm:condSafe-sound}]
That $\QQ$ is a conditional inductive invariant follows directly from the
structure of the generated constraints, which are (modulo renaming) also
discussed in \cite{Bradley05a}.

We prove the claim about conditional safety by contradiction via induction over
the length of evaluations.
Assume that there is an unsafe execution
$$
  (\ell_1, \vect{v}_1) \to_{t_1} (\ell_2, \vect{v}_2) \to_{t_2} \ldots \to_{t_n} (\ell_{n}, \vect{v}_{n})
$$
of length $n \in \mathbb{N}_{> 1}$ such that $t_1 \in \entries_\cC \cup \cC$ (i.e.,
$\ell_2$ is always a location in $\cC$), $t_n = t_{\textrm{exit}}$, $\vect{v}_2 \models
\cQ(\ell_2)$ and $\vect{v}_{n} \not\models \varphi$. We will show that no
such evaluation can exist, implying our proposition as the special case $t_1 \in
\entries_{\cC}$.

As the component graph is a DAG, $t_1 \in \entries_{\cC} \cup \cC$ and
$t_\textrm{exit}$ is an exit transition of $\cC$, we have $t_i \in \cC$ for all $1 < i
< n$.

We first consider the case $n = 2$ ($n = 1$ would be the case where $t_1$ is
both an entry and exit transition, and thus infeasible).
Let $t_2 = (\ell_{1}, \tau_2, \ell_2)$.
Then, $\vect{v}_2 \models \QQ(\ell_2) \equiv \sigma(I_{\ell_2,k})$ by choice and
definition, and $\sigma(I_{\ell_2, k}) \land \tau_2 \limplies \varphi'$ by constraint
$\safety_k$. Thus, no unsafe evaluation of length $2$ is possible.

We now assume $n > 2$ and that the proposition has been proven for evaluations
of length $n - 1$.
Let $t_2 = (\ell_2, \tau_2, \ell_3)$.
For length $n$, we have that the valuations $\vect{v}_2$, $\vect{v}_3$ satisfy
$\tau_2$, and $\vect{v}_2 \models \cQ(\ell_2) \equiv \sigma(I_{\ell_2,k})$.
These are the premises of our consecution constraint
$\consecution_{t_2,k} \equiv I_{\ell_2,k} \land \tau_2 \limplies I_{\ell_3,k}$,
and thus $\vect{v}_3 \models \sigma(I_{\ell_3,k}) \equiv \cQ(\ell_3)$.
Hence, we instead have to consider the evaluation of length $n-1$ starting in
$(\ell_2, \vect{v}_2)$, which by our hypothesis is infeasible.
\end{proof}

\begin{proof}[Proof of \rT{thm:checkSafe-sound}]
We prove the proposition by induction over the number $u$ of recursive calls of
\checkSafe.

In the base case $u = 0$, we have that $\tau_{\mathrm{exit}} \limplies \varphi'$, i.e.,
the condition to prove is always a consequence of using the transition
$t_{\mathrm{exit}}$, and the claim trivially holds.

Let now $u > 0$, and we assume that the proposition has been shown for all calls
of \checkSafe that return \textsf{Safe} and need at most $u - 1$ recursive calls
of \checkSafe.

We now consider a program evaluation
$$
  (\ell_0, \vect{v}_0) \to_{t_0} \ldots \to_{t_{m-1}} (\ell_{m}, \vect{v}_{m}) \to_{t_m} \ldots \to_{t_n} (\ell_n, \vect{v}_n)
$$
with $t_{m - 1} \in \entries_{\cC}$ and $t_n = t_{\textrm{exit}}$.

First, we consider the case that \checkSafe returns \textsf{Safe} in line 11,
where all preconditions are satisfied for all entry transitions.
By the condition $\forall L \in \QQ(\ell_m).\; \textsf{res}[t_{m-1}, L] = \textsf{Safe}$ and our
induction hypothesis, we know that the program is safe for $(t_{m-1},
\QQ(\ell_m))$. By construction of $\QQ$ and \rT{thm:condSafe-sound}, this then
implies that the program is safe for $(t_\textrm{exit}, \varphi)$.

The second case is returning the result of \checkSafe on the narrowed
program in lines 19 and 20. For this, we need to prove that our program narrowing is
indeed correct.
Assume now that the considered evaluation is unsafe, i.e., that $\vect{v}_n
\not\models \varphi$.
We will show that our narrowing preserves unsafe evaluations. Then, as the
recursive call of \checkSafe (with recursion depth $u-1$) is correct by our
induction hypothesis, we can conclude that \textsf{Safe} is only returned if
there are no unsafe evaluations.

We first consider the narrowing $\hat{\entries}_\cC$. By our induction
hypothesis, we know that
 $\vect{v}_m \models (\bigwedge_{L \in \QQ(\ell_m), \textsf{res}[t_{m-1}, L] =
   \textsf{Safe}} L')$ holds.
Now assume that the narrowed version of $t_{m - 1}$ is not enabled anymore
because of the added condition.
Then $\vect{v}_m \not\models \lnot(\bigwedge_{L \in \cQ({\ell_m}),
  \textsf{res}[t_{m-1},L] = \textsf{Maybe}} L')$ holds, and thus
 $\vect{v}_m \models \bigwedge_{L \in \cQ(\ell_m)} L'$.
But then, our evaluation is safe (by the same argument as in the proof of
\rT{thm:condSafe-sound}), contradicting our assumption that the considered
evaluation is unsafe. Thus, unsafe evaluations are not broken by our narrowing
of entry transitions.

Similarly, we now consider the narrowing $\hat{\cC}$. We consider an
evaluation step $(\ell_w, \vect{v}_w) \to_{t_w} (\ell_{w+1}, \vect{v}_{w+1})$
with $t_w \in \hat\cC$.
If the narrowed version $\hat{t}_w \in \hat\cC$ of $t_w$ cannot be used, then
either $\vect{v}_w \models \QQ(\ell_w)$ or $\vect{v}_{w+1} \models
\QQ(\ell_{w+1})$ holds, and again, by an argument similar to the proof of 
\rT{thm:condSafe-sound}, this contradicts the assumption that our evaluation is
unsafe. Thus, unsafe evaluations are preserved by narrowing of the program
component.
\end{proof}

\section{Choice of Weights in the Max-SMT Formulation}
\label{sect:weights}
In VeryMax, all weights for the initiation conditions are currently
the same. Choosing different weights would bias the solver towards
satisfying certain initiation conditions.  Weights play a crucial role
in VeryMax to favor invariants that imply postconditions over ones
that disable transitions (cf. Sect.  IV-C-a). The chosen MaxSMT
framework also allows adding other techniques with lower weights,
e.g. the techniques from \cite{Larraz14} that strengthen single program
transitions during the analysis.


\end{document}